\documentclass[a4paper,conference,10pt,twocolumns]{IEEEtran}
\usepackage{amsmath}
\usepackage{amsthm}
\usepackage{array}
\usepackage{epsf}
\usepackage{epsfig}
\usepackage{subfigure}
\usepackage{cite}
\usepackage{times}
\usepackage{textcomp}
\usepackage{graphicx}
\usepackage[utf8]{inputenc}
\usepackage{relsize}
\usepackage{amssymb}
\usepackage{flushend}

\newtheorem{lemma}{Lemma}

\begin{document}

\title{Censoring for Improved Sensing Performance in Infrastructure-less Cognitive Radio Networks}


\author{\IEEEauthorblockN{Mohamed Seif\IEEEauthorrefmark{1},
Mohammed Karmoose\IEEEauthorrefmark{2}, and Moustafa Youssef \IEEEauthorrefmark{3}}
\IEEEauthorblockA{\IEEEauthorrefmark{1}Wireless Intelligent Networks Center (WINC), Nile University, Egypt\\
\IEEEauthorrefmark{2} Department of Electrical Engineering, Alexandria University, Egypt\\
\IEEEauthorrefmark{3}  Department of Computer Science, Egypt-Japan University of Science and Technology (E-JUST), Egypt\\
Email: m.seif@nu.edu.eg,
mkarmoose@ucla.edu, 
moustafa.youssef@ejust.edu.eg}}


\maketitle

\begin{abstract}
Censoring has been proposed to be utilized in wireless distributed detection networks with a fusion center to enhance network performance in terms of error probability in addition to the well-established energy saving gains. In this paper, we further examine the employment of censoring in infrastructure-less cognitive radio networks, where nodes employ binary consensus algorithms to take global decisions regarding a binary hypothesis test without a fusion center to coordinate such a process. We show analytically - and verify by simulations - that censoring enhances the performance of such networks in terms of error probability and convergence times. Our protocol shows performance gains up to 46.6\% in terms of average error probability over its conventional counterpart, in addition to performance gains of about 48.7\% in terms of average energy expenditure and savings up to 50\% in incurred transmission overhead.
\end{abstract}

\IEEEpeerreviewmaketitle

\section{Introduction}
Cognitive Radios Networks (CRNs) have emerged as a viable solution to the problem of inefficient spectrum utilization under the current spectrum licensing paradigm\cite{fcc}. In the opportunistic spectrum access cognitive model, secondary users (SUs) that do not possess a license to use a spectrum band are nevertheless allowed to transmit whenever the licensed or primary users (PUs) are not active. Spectrum sensing is therefore a mandatory task for SUs to detect the presence of the PUs in order to identify the available transmission opportunities\cite{haykin2005cognitive}. Due to the fact that spectrum sensing by a single SU may be highly unreliable, cooperative sensing can be employed such that the decision regarding the presence or absence of PUs is based on measurements taken by a cluster of SUs, thereby enhancing the reliability of the taken decisions. However, this comes at an inevitable cost of increased transmission overhead and energy expenditure.

To lessen the impact of this cost, \textit{censoring} was first introduced by Rago \cite{rago1996censoring} in wireless sensor networks with soft-decision detection frameworks where sensors send the locally-computed Log-Likelihood Ratio (LLR) values to the fusion center (FC), which is responsible for making a global decision. Energy savings were attained from employing censoring in this framework on the expense of loss in terms of average error probability. Recently, censoring was considered in a hard-decision framework, where sensors apply one-bit quantization to their local measurements prior to transmission. In \cite{seddik2011censoring,karmoose2012censoring,karmoose2014performance}, it was proven that censoring enhances the performance of such networks in terms of error probability of the global decision, in addition to the previously attained and characterized energy savings. Specifically, in \cite{seddik2011censoring} and \cite{karmoose2012censoring}, networks with parallel topologies were considered with Time-Division-Multiple-Access (TDMA) and Type-Based-Multiple-Access (TBMA) techniques respectively, while in \cite{karmoose2014performance} networks with sequential topologies were considered.

However, all these mentioned censoring systems have the common assumption that a fusion center is present for coordination among the transmitting SUs, global decision-making, and finally reporting back the decision to SUs. Assuming the presence of a fusion center in fact contradicts with the current paradigm of large and/or ad hoc CRNs\cite{youssef2013routing,karmoose2013dead}. In addition, as was stated in \cite{mishra2006cooperative}, reporting the sensing information from a large number of SUs to the fusion center may be problematic. Moreover, in \cite{sun2007cluster}, the authors show the deteriorating impact of multipath fading in the reporting channel on the detection performance of a centralized cooperative sensing framework.

The major contribution of this paper is to propose a censoring-based hard-decision distributed detection framework that is well-suited for \emph{\textbf{infrastructure-less}} CRNs (i.e., does not contain a designated central coordinator). Our proposed framework does not require the presence of a fusion center for data collection and coordination. Instead, we propose the use of binary consensus algorithms which allow SUs to exchange binary information regarding the presence or absence of a PU with direct neighbors and eventually arrive at a global decision based on the collective decisions of direct neighbors. The closest work to ours is \cite{olfati2007consensus,ashrafi2011binary}, where binary consensus algorithms are used for distributed spectrum sensing in a non-censoring fashion. However, we show analytically - and verify through numerical simulations - that our proposed infrastructure-less censoring-based system significantly outperforms conventional systems in terms of average error probability (up to 46.6\%), overall incurred overhead (up to 50\%), and energy expenditure of SUs (up to 48.7\%).

The rest of the paper is organized as follows. Section~\ref{binary_consensus_algorithm} gives a background review on binary consensus algorithms and its application in CRNs. Section~\ref{sec:model} describes the system model. We analyze the system in Section~IV and validate the obtained expressions in Section~V. Section~VI provides numerical evaluation of the attained expressions and shows the performance gains for our proposed system. Finally, we conclude in Section~VII.

\section{Background: Binary Consensus Algorithm}
\label{binary_consensus_algorithm}
In this section, we provide a basic review of binary consensus algorithms in distributed spectrum sensing\cite{olfati2007consensus}. To allow SUs to cooperatively arrive at a global decision and with no help from a designated central entity, each SU makes a local decision regarding the presence or absence of a PU, denoted as $H_1$ or $H_0$ respectively. SUs then exchange their binary local decisions with their direct neighbors for $K$ time steps, where $K$ is the running time of the algorithm. Upon the termination of the algorithm, each SU individually makes a decision for $H_1$ or $H_0$, based on the received decisions from neighboring nodes. Let $b(k)=[b_{1}(k),\dots,b_{M}(k)]^{T}$ be the vector of local decisions at time step $k$ at $M$ SUs. The binary consensus algorithms are summarized as follows:
\begin{enumerate}
 \item At first time step ($k=0$), each SU initially transmits its local decision to the neighbors it is connected to at this time step.
 \item At each consecutive time step ($0 < k < K$), each SU collects the decisions transmitted by neighboring SUs. It then combines these decisions, along with the previously received decisions from past time steps, through a \textit{combining function} which generates a new decision $b(k)$ to be transmitted to neighboring nodes at the current time step $k$. This can be mathematically expressed as:
 \begin{equation}
 \label{combining_function}
  b(k) = \mathcal{F}(b(n), \; n=0,\cdots,k-1) \: 0<k<K-1
 \end{equation}

 \noindent where $\mathcal{F}(.)$ is the combining function.
 \item Upon the termination of the algorithm ($k=K$), each node makes a final decision based on the previously obtained decisions from all time steps $0<k<K$, through a \textit{decision function}. This is mathematically expressed as:
  \begin{equation}
 \label{decision_function}
  b(K) = \mathcal{D}(b(n), \; n=0,\cdots,K-1)
 \end{equation}
  \noindent where $\mathcal{D}(.)$ is the decision function.
\end{enumerate}

Based on the appropriate choice of the combining and decision functions, the binary consensus algorithm is guaranteed to converge to a common decision after a sufficiently long running time, i.e., $b_{i}(K) = b^\star, \; \forall i=1,\cdots,M$ as $K \rightarrow \infty$\cite{olfati2007consensus}.

In this paper, we focus on one variation of binary consensus algorithms, i.e., \textit{Diversity-based binary consensus algorithm}\cite{ashrafi2011binary}, in which a SU uses its initial local decision for decision reporting at all consecutive time steps, and the combining function is basically a majority rule for the received decisions along the $K$ time steps. The combining and decision functions in this case are mathematically expressed as:

\vspace{- 0.1 in} 

\begin{equation}
\begin{split}
 \mathcal{F}(b(n), \; n=0,\cdots,&k-1) = b(k-1), \hspace{20pt} 1< k <K, \\
 \mathcal{D}(b(n), \; n=0,\cdots,&K-1) = \\
 &\text{Dec}\bigg(\dfrac{1}{M}(b(0)+\dfrac{1}{Kp}\sum\limits_{t=0}^{K-1} \boldsymbol{A}(t)b(t))\bigg)
\end{split}
\end{equation}

\noindent where $\text{Dec}(x)=\begin{cases}
                                      1, & \text{if } x \geq 0 \\
                                      0, & \text{if } x < 0
                                     \end{cases}$
\noindent with 1 and 0 corresponding to deciding $H_1$ and $H_0$, respectively. When $x$ is a vector, the function operates on it element-wise.

\begin{table}
\footnotesize
 \begin{tabular}{|c|l|}
 \hline
  \textbf{Symbol} &\textbf{Description} \\
  \hline
  $M$ & Number of SUs. \\
  $\bar{\tau}_{ij}(k)$ & Instantaneous SNR of $i$th SU at $j$th SU at time step k. \\
  $\tau$ & Min. acceptable threshold for successful decoding. \\
  $f_s$ & Center frequency of the energy detector Bandpass filter. \\
  $B$ & Bandwidth of the energy detector Bandpass filter. \\
  $T$ & Time interval of the integrator in the energy detector. \\
  $r_{i}(t)$ & Received primary signal by $i$th SU at time $t$. \\
  $s(t)$ & Signal transmitted by the PU. \\
  $n_{i}(t)$ & Additive white Gaussian noise. \\
  $h_{i}(t)$ & Complex-Gaussian channel gain between the PU and $i$th SU. \\
  $x_{i}(0)$ & Output of the energy detector of the $i$th SU. \\
  $b_i(0)$ & Initial binary decision of $i$th SU. \\
  $\eta$ & Local detection threshold in the conventional case. \\
  $\eta_1$/$\eta_0$ & Upper/Lower detection thresholds in the censoring case. \\
  \hline
 \end{tabular}
\label{table::list_symbols}
\caption{List of symbols.}
\end{table}

\section{System Model}\label{sec:model}
In this section, we describe the system model of our proposed framework. TABLE~I enlists the various symbols used in the upcoming analysis.
\subsection*{Network Model}
We consider a CRN that consists of $M$ SUs that opportunistically transmit in the presence of a PU. We model the secondary network as an undirected random graph $\mathcal{G}=(\mathcal{N}, \mathcal{E})$, where $\mathcal{N}$, the set of nodes, represents the SUs, and $\mathcal{E}$, the set of edges, denotes the connectivity of SUs. A node $i$ is connected to node $j$ if $\bar{\tau}_{ij}(k) > \tau$, where $\bar{\tau}_{ij}(k)$ is the instantaneous SNR of the signal of SU $i$ at SU $j$ at time step index $k$, and $\tau$ is the minimum acceptable SNR required for successful decoding of secondary transmission. Assuming channel reciprocity, then $\bar{\tau}_{ji}(k) = \bar{\tau}_{ij}(k)$ and both SUs are in the neighborhood of each other if their instantaneous SNR exceeds the decoding threshold. The probabilistic nature of the wireless channel and therefore the instantaneous SNR of received secondary signals are the reasons behind the ``randomness'' in the graph $\mathcal{G}$. Due to the absence of a central entity for coordinating transmissions, nodes which are in the same
transmission range employ the CSMA/CA multiple
access protocol to concurrently use the bandwidth for transmission while reducing the probability of data collision.

\subsection*{Spectrum Sensing}
Each SU is equipped with an energy detector as shown in Figure \ref{fig::energy_detection}. A bandpass filter selects the desired primary channel with a center frequency $f_{s}$ and bandwidth $B$. The filter is followed by a squaring-law device and an integrator with time interval $T$ to measure the signal power. The output of the integrator for the $i$th SU is denoted by $x_{i}(0)$. We assume a flat-fading channel model between the PU and the SUs. Let $r_{i}(t)$ represent the received primary signal received by the $i$th SU. Let $H_1$ and $H_0$ represent the hypothesis of the presence or absence of a PU, respectively. The received signal can then be expressed as:

 \[ r_{i}(t) = \left\{
  \begin{array}{l l}
    n_{i}(t) & \quad \text{Given $H_{0}$}\\
    h_{i}(t)s(t)+n_{i}(t) & \quad \text{Given $H_{1}$}
  \end{array} \right.\]

\noindent where $s(t)$ is the signal transmitted by the PU, $n_{i}(t)$ is AWGN at the $i$th SU, and $h_{i}(t)$ is the complex-Gaussian channel gain between the PU and the $i$th SU.

\begin{figure}
  \includegraphics[height=2.8cm, width=9.1cm]{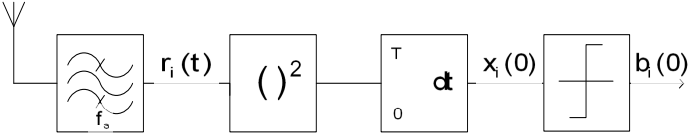}
\vspace{-1em}
\caption{Block diagram of an energy detector at each sensor node.}
\label{fig::energy_detection}
\end{figure}

\subsection*{Local Decisions}
\subsubsection{Conventional Case}
After obtaining a local measurement $x_i(0)$, each node makes a local decision regarding the presence or absence of the PU. The decision making process can be expressed as follows:

\begin{equation}
 b_i(0) = \left\{ \begin{array}{ll}
                     1, & \hbox{$x_{i}(0) \geq \eta$} \\
                     -1, & \hbox{$x_{i}(0) < \eta$} \\
                    \end{array}\right.
\end{equation}

\noindent  where $b_i(0)$ is the initial binary decision of node $i$ and $\eta$ is the local decision threshold of all SUs. A decision that is equal to $1$ and $-1$ corresponds to locally detecting $H_1$ and $H_0$, respectively.

\subsubsection{Censoring Case}
We allow nodes to \textit{censor} transmission: a node employs a two-threshold decision making process as shown in Figure \ref{fig::thresholds}, and withholds transmission in case a measurement falls in between the two thresholds. Let $\eta_1$ and $\eta_0$ denote the upper and lower thresholds respectively, then the detection process can be expressed as:

\begin{equation}
 b_i(0) = \left\{
 \begin{array}{ll}
            1,  & \hbox{$x_i(0) \geq \eta_1$} \\
            0, & \hbox{$\eta_0 \leq x_i(0) < \eta_1$} \\
            -1, & \hbox{$x_i(0) < \eta_0$}
            \end{array}\right.
\end{equation}

\noindent where $b_i(0)=0$ represents censoring.

\subsection*{Global Decision}
SUs employ the diversity-based binary consensus algorithm described in Section \ref{binary_consensus_algorithm} to exchange local decisions with direct neighbors and eventually arrive at a collective decision regarding the PU. To characterize the detection performance of our distributed system, we assume a global-AND rule of the decisions made by all SUs, i.e.,

\begin{equation}
\label{definition_pd_pfa}
 \begin{split}
  &P_d(K) \doteq \text{Pr}(b_{i}(K)=0, \forall i=1,...,M |H_1), \\
  &P_{fa}(K) \doteq \text{Pr}(b_{i}(K)=0, \forall i=1,...,M |H_0)
 \end{split}
\end{equation}
\noindent and the average error probability is defined as $P_e(K) \doteq \pi_{H_0}P_{fa}(K) + \pi_{H_1}(1 - P_{d}(K))$, where $\pi_{H_0}$ and $\pi_{H_1}$ are the prior probabilities of having $H_0$ and $H_1$ respectively. We adopt a global-AND rule since it is the most sensitive to detection error and therefore the most conservative of all other decision rules in terms of detection performance. This can be viewed as a worst-case for the primary network.
%

\section{System Analysis}
In this section, we analyze our proposed system and derive analytical expressions to characterize different performance metrics such as average error probability, energy expenditure and network overhead. We also provide numerical simulations for the performance metrics to ensure the validity of our obtained expressions.

\subsection{Average Error Probability}

Let $\gamma_i$ be the signal-to-noise ratio (SNR) of the received primary signal at the $i$th node. Given the energy detector model mentioned previously, and assuming identical statistical characteristics of $r_{i}(t), \forall i=1,\cdots,M$, it is found that the probability distribution of $x_i(0), \forall i=1,\cdots,M$ is\cite{digham2007energy}:

\[ x_{i}(0)|\gamma_{i} \sim  \left\{
  \begin{array}{l l}

 \chi^{2}_{2TB} & \quad \text{Given $H_{o}$}\\
    \chi^{2}_{2TB}(2\gamma_{i}) & \quad \text{Given $H_{1}$}
  \end{array} \right.\]

\noindent where $\chi^{2}_{2TB}$ and $\chi^{2}_{2TB}(2\gamma_{i})$ are central and non-central with $2\gamma_i$ non-centrality parameter chi-squared distributions, respectively, both with $2TB$ degrees of freedom.

For the conventional system, the local decision probabilities of the $i$th node under hypothesis $H_{1}$ and $H_0$, denoted by $\pi_{11}$ and $\pi_{10}$ respectively, can be formulated as \cite{ashrafi2011binary}:


\begin{figure*}
\vbox{
 \begin{equation}
 \label{pd_censoring}
 \small
\begin{split}
P_{d}(K) &\approx \sum\limits_{c=0}^{M} \sum\limits_{n}^{} \frac{M!}{n!\, c!\, (M-n-c)!}
\pi_{11}^{n} \pi_{01}^{c} \pi_{-11}^{M-n-c} \bigg[1-Q\bigg(\frac{n}{\sqrt{\frac{(1-p) (M-c)}{pK}}}\bigg)\bigg]^{c}
\bigg[1-Q\bigg(\frac{n}{\sqrt{\frac{(1-p) (M-1-c)}{pK}}}\bigg)\bigg]^{M-c},
\end{split}
\end{equation}}

\vbox{
\begin{equation}
\small
\label{pfa_censoring}
\begin{split}
P_{fa}(K) &\approx \sum\limits_{c=0}^{M} \sum\limits_{n}^{}\frac{M!}{n!c!(M-n-c)!}
\pi_{10}^{n}\pi_{00}^c\pi_{-10}^{M-n-c} \bigg[1-Q\bigg(\frac{n}{\sqrt{\frac{(1-p) (M-c)}{pK}}}\bigg)\bigg]^{c}
\bigg[1-Q\bigg(\frac{n}{\sqrt{\frac{(1-p) (M-1-c)}{pK}}}\bigg)\bigg]^{M-c}.
\end{split}
\end{equation}}

\end{figure*}

\begin{equation}
\label{pi_11}
 \begin{split}
\pi_{11}&\doteq \text{Pr}(b_{i}(0)=1|H_{1})={e}^{-\frac{\eta}{2}}\sum\limits_{n=0}^{TB-2}\frac{1}{n!}+{(\frac{\bar{\gamma}+1}{\bar{\gamma}})}^{TB-1} \\
&\times \bigg({e}^{-\frac{\eta}{2(\bar{\gamma}+1)}}-{e}^{-\frac{\eta}{2}}\sum\limits_{n=0}^{TB-2}\frac{1}{n!}\big(\frac{\eta \bar{\gamma}}{2(\bar{\gamma}+1)}\big)^{n} \bigg)
 \end{split}
\end{equation}

\begin{equation}
\label{pi_10}
 \begin{split}
\pi_{10} &\doteq \text{pr}(b_{i}(0)=1|H_{0}) = 1-\frac{\Gamma_{l}(TB,\frac{\eta}{2})}{\Gamma(TB)}
\end{split}
\end{equation}

\noindent where $\Gamma(x)$ and $\Gamma_{l}(x,y) $ are the Gamma and lower incomplete Gamma functions, respectively.

For the censoring system, we define the local decision probabilities of the $i$th node under hypothesis $H_j, j=0,1$ as:

\begin{equation}
 \begin{split}
  &\pi_{1j} = \text{Pr}(b_i(0)=1|H_j) = \text{Pr}(x_{i}(0)  >  \eta_{1}|H_{j}) \\
  &\pi_{-1j} = \text{Pr}(b_i(0)=-1|H_j) = \text{Pr}(x_{i}(0)  \leq  \eta_{0}|H_{j})\\
  &\pi_{0j} = \text{Pr}(b_i(0)=0|H_j) = 1- (\pi^c_{1j} + \pi^c_{-1j}).
 \end{split}
\end{equation}

It can be shown that $\pi_{11}$ and $1-\pi_{-11}$ have the same functional form as (\ref{pi_11}), and $\pi_{10}$ and $1-\pi_{-10}$ have the same functional form as (\ref{pi_10}), while substituting $\eta$ with $\eta_1$ and $\eta_0$ respectively in both cases.

To model the connectivity between SUs among the network, the \textit{adjacency matrix} $\boldsymbol{A}(k)$ is defined as:

\begin{equation}
a_{ij}(k) = \begin{cases}
1 &  \text{if $\bar{\tau}_{ij}(k) >= \tau, \: i \neq j$}  \\
0 & \text{otherwise}
\end{cases}
\end{equation}

\noindent where $a_{ij}(k), \; i,j \in \{1, \; \cdots \;,M \}$ denotes the $(i,j)$th element of the matrix $\boldsymbol{A}(k)$. $a_{ij}(k)=1$ in this context means that nodes $i$ and $j$ are connected, and vice versa. For ease of exposition, we neglect the intricate details of the wireless channel transmission and communication scheme and assume $a_{ij}(k), \; i \neq j, \; \forall \; k \geq 0$ to be Bernoulli random processes with $p=\text{Pr}(a_{ij}(k))=\text{Pr}(\bar{\tau}_{ji}(k)>\tau)$, which implicitly models wireless channel characteristics.

The following lemma provides the detection and false alarm probabilities for the censoring system.

\begin{lemma}
 Assume a cognitive radio network that consists of $M$ SUs employing a binary consensus algorithm as described in Section \ref{binary_consensus_algorithm}, where $p$ is the probability of having a reliable link between two SUs. Then for a sufficiently large $M$, the probability of detection and false alarm of the global decision can be approximated by (\ref{pd_censoring}) and (\ref{pfa_censoring}) respectively.
\end{lemma}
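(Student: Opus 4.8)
The plan is to condition on the profile of initial local decisions, use the fact that the diversity-based rule freezes each SU's transmitted symbol at $b_i(0)$, reduce each node's final decision to the sign of a weighted sum of independent Binomial link-counts, and turn that sign probability into a $Q$-function via a central-limit argument; the global-AND event then factorizes over nodes for large $M$. First I would invoke the law of total probability over the initial-decision profile. Because the energy-detector outputs $x_i(0)$ are i.i.d.\ across the $M$ SUs and the two-threshold quantizer is memoryless, the vector (number of nodes deciding $1$, number censoring, number deciding $-1$) is multinomial; under $H_1$ its parameters are $(\pi_{11},\pi_{01},\pi_{-11})$, which yields the combinatorial prefactor $\frac{M!}{n!\,c!\,(M-n-c)!}\pi_{11}^{n}\pi_{01}^{c}\pi_{-11}^{M-n-c}$ in (\ref{pd_censoring}), with the individual $\pi$'s read off from (\ref{pi_11})--(\ref{pi_10}) after the stated substitutions $\eta\mapsto\eta_1,\eta_0$. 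The double sum $\sum_{c}\sum_{n}$ is over all admissible profiles.

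Fix a profile with $n$ ones, $c$ censored nodes, and $M-n-c$ minus-ones. Since the diversity-based combining rule gives $b(k)=b(k-1)$ for all $k$, the quantity fed to $\mathrm{Dec}(\cdot)$ at node $i$ is $\frac{1}{M}\big(b_i(0)+\frac{1}{Kp}\sum_{t=0}^{K-1}\sum_{j\neq i}a_{ij}(t)\,b_j(0)\big)$, and conditionally on the profile the only randomness left is the i.i.d.\ Bernoulli$(p)$ array $\{a_{ij}(t)\}$. For each $j$ one has $\sum_{t=0}^{K-1}a_{ij}(t)\sim\mathrm{Bin}(K,p)$, independently over $j$, so $\sum_{t,j}a_{ij}(t)b_j(0)$ is a $\{+1,0,-1\}$-weighted sum of independent Binomials; for large $M$ (equivalently, many contributing links) the de Moivre--Laplace / Lindeberg CLT makes it approximately Gaussian with mean $Kp$ times the net vote reaching $i$ and variance $Kp(1-p)$ times the number of non-censoring neighbors of $i$. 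Standardizing, $\Pr(\text{node }i\text{ lands on the correct side}\mid\text{profile})$ becomes $1-Q$ of the standardized threshold. There are exactly two node types: a censoring node ($b_i(0)=0$) hears from all $M-c$ non-censoring nodes and contributes the factor $1-Q\!\big(n/\sqrt{(1-p)(M-c)/(pK)}\big)$, while a non-censoring node hears from the remaining $M-1-c$ non-censoring nodes and contributes $1-Q\!\big(n/\sqrt{(1-p)(M-1-c)/(pK)}\big)$; this matching uses that, under $H_1$ with a well-designed detector, the dominant profiles carry very few $-1$ decisions, so the net vote is $\approx n$ and the $b_i(0)$ term is negligible at that scale.

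The global-AND event is that all $M$ nodes land on the prescribed side (here $b_i(K)=0$ for every $i$ under $H_1$). Each node's decision is a deterministic functional of the one common Bernoulli array, but nodes $i$ and $j$ share only the single edge process $a_{ij}(\cdot)$, whose relative weight in either node's sum is $O(1/M)$; hence for large $M$ I would treat the $M$ per-node events as asymptotically independent, so the conditional AND probability factorizes as $\big[1-Q(\cdot)\big]^{c}\big[1-Q(\cdot)\big]^{M-c}$, i.e.\ the $c$ censoring factors times the $M-c$ non-censoring factors, exactly as written in (\ref{pd_censoring}). Summing over $(n,c)$ gives the detection expression, and (\ref{pfa_censoring}) follows by the identical argument with $(\pi_{11},\pi_{01},\pi_{-11})$ replaced by $(\pi_{10},\pi_{00},\pi_{-10})$, because the consensus dynamics --- and therefore the $Q$-factors --- do not depend on which hypothesis is true.

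The step I expect to be the real obstacle is the asymptotic-independence claim: the $M$ final decisions are all functions of one symmetric random array, so one must quantify the residual dependence --- e.g.\ via a Chen--Stein / Poisson-approximation bound on the number of ``wrong'' nodes, or a pairwise-correlation estimate showing it is $O(1/M)$ --- while simultaneously controlling the CLT error in each of the $M$ Gaussian approximations, since a product of $M$ nearly-correct factors can amplify small per-factor errors. This is precisely the content absorbed by the phrase ``for a sufficiently large $M$''. A milder, secondary gap is the ``net vote $\approx n$'' replacement in the $Q$-arguments, which requires $\pi_{-11}$ (resp.\ $\pi_{10}$) to be small; making that rigorous would either restrict the range of the inner sum or add an explicit error term.
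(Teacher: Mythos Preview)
Your proposal is correct and follows essentially the same route as the paper: condition on the initial-decision counts (multinomial), use the diversity rule to freeze $b(t)\equiv b(0)$, apply a CLT to the Bernoulli link sums so that each $y_i(K)$ is approximately Gaussian with variance proportional to $M-b_i^2(0)-C$ (hence the two $Q$-factors for censoring versus non-censoring nodes), and then factorize the global-AND event into a product over nodes. The paper's proof is terser---it simply writes the product $\prod_i Q(-\mu_{y_i}/\sigma_{y_i})$ and invokes the CLT without separately addressing the weak-dependence issue you flag---so your discussion of the asymptotic-independence step and the ``net vote $\approx n$'' replacement is more careful than what the paper actually provides, but the underlying argument is the same.
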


\begin{proof}
According to (\ref{definition_pd_pfa}), the probability of detection of the global decision can be expressed as:
\begin{equation}
\small
\label{prf}
\begin{split}
 P_d(K) &= \text{Pr}(b_{i}(K)=0, \forall i=1,...,M |H_1) \\
 &= \text{Pr}(y_{i}(K)>0, \forall i=1,...,M |H_1) \\
 &\overset { (a) }{ = } \sum\limits_{c=0}^{M} \sum\limits_{S \in \mathcal{S}} \text{Pr}(y_{i}(K)>0, \forall i=1,...,M |C,S,H_1) \\
 &\times \text{Pr}(C,S|H_1) \\
 &\overset { (b) }{\approx} \prod_{i=1}^{M}\mathcal{Q}\left( \dfrac{-\mu_{y_i}}{\sigma_{y_i}} \right) \times \text{Pr}(C,S|H_1)
 \end{split}
\end{equation}
\noindent where $y(K) = [y_{1}(K),\cdots,y_{M}(K)]^T =  \dfrac{1}{M}(b(0)+\dfrac{1}{Kp}\sum\limits_{t=0}^{K-1} \boldsymbol{A}(t)b(t))$, $S=b(0)^T\bold{1}$ and $\bold{1} = [1,\cdots,1]^T$, $C$ is the number of nodes having 0 as initial decisions, and $Q(z) = \dfrac{1}{\sqrt{2\pi}}\int_z^{\infty} \text{exp}(-t^2/2)dt$. The third equality (step (a)) comes from Bayesian chain rule, where $\mathcal{S}=\{-(M-c),-(M-c+2),\dots,M-c\}$. The fourth near-equality (step (b)) holds because for a sufficiently large $M$, we can apply the Central Limit Theorem (CLT) to approximate the value of $P_d(K)$. Accordingly, the distribution of $y_{i}$ given $S$ and $C$ is a Gaussian distribution with mean and variance that are equal to:

\begin{equation}
\label{parameters}
\begin{split}
\mu_{y_i}(K)&=S, \\
\sigma_{y_i}^{2}(K)&=\frac{1-P}{KP} |M-b_{i}^{2}(0)-C|
\end{split}
\end{equation}

For the second term in the third equality in (\ref{prf}), $\text{Pr}(C,S|H_1)$ can be viewed as a multinomial distribution\cite{forbes2011statistical} of having $C$ SUs deciding 0 (w.p. $\pi_{01}$) and $n = \left(S + M - C\right)/2$ SUs deciding a 1 (w.p. $\pi_{11}$). Therefore, it can be expressed as:

\begin{equation}
\label{multinomial}
\begin{split}
 \text{Pr}(C&,\; S|H_1) = \\
 \text{Pr}(C&=c,\; n = \dfrac{S + M - c}{2}|H_1) =\\
 & \dfrac{M!}{n! \; c! \;  (M-n-c)!} \times \pi_{11}^n \pi_{01}^{c} \pi_{-11}^{(M - n - c)}
 \end{split}
\end{equation}

Plugging (\ref{parameters}) and (\ref{multinomial}) in (\ref{prf}) gives (\ref{pd_censoring}). Similar approach can be followed to prove (\ref{pfa_censoring}).

\end{proof}

For comparison, similar expressions are obtained for the conventional (non-censoring) case based the results in \cite{ashrafi2011binary} with minor variations to account for polar transmissions. Those are:

\begin{equation}
\label{pd_conventional}
\begin{split}
 P_{d}(K)&\approx \sum\limits_{n=-M,-M+2,\dots}^{M}{M \choose \frac{M+n}{2}} \pi_{11}^{\frac{M+n}{2}} (1-\pi_{11})^{\frac{M-n}{2}} \\
 &\times \bigg[1-Q\bigg(\frac{n}{\sqrt{\frac{(1-p) (M-1)}{pK}}}\bigg)\bigg]^{M}
 \end{split}
\end{equation}

\begin{equation}
\label{pfa_conventional}
\begin{split}
P_{fa}(K) &\approx\sum_{n=-M,-M+2,\dots}^{M}{M \choose \frac{M+n}{2}} \pi_{10}^{\frac{M+n}{2}} (1-\pi_{10})^{\frac{M-n}{2}} \\
&\times \bigg[1-Q\bigg(\frac{n}{\sqrt{\frac{(1-p) (M-1)}{pK}}}\bigg)\bigg]^{M}
 \end{split}
\end{equation}

\begin{figure}
\centering
 \includegraphics[width=2.6in]{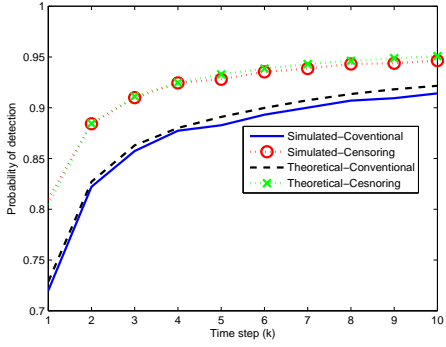}
\caption{Detection Probability: Simulation versus analytical expression - $M=51$, $p=0.8$, $\bar{\gamma}=2 dB$, $\eta=10.3$, $\eta_1=14.6$, $\eta_0=7$, $TB=5$.}
\label{fig::detection_verify}
\end{figure}

\begin{figure*}
 \centering
 \begin{minipage}{0.3\textwidth}
  \includegraphics[width=2.4in]{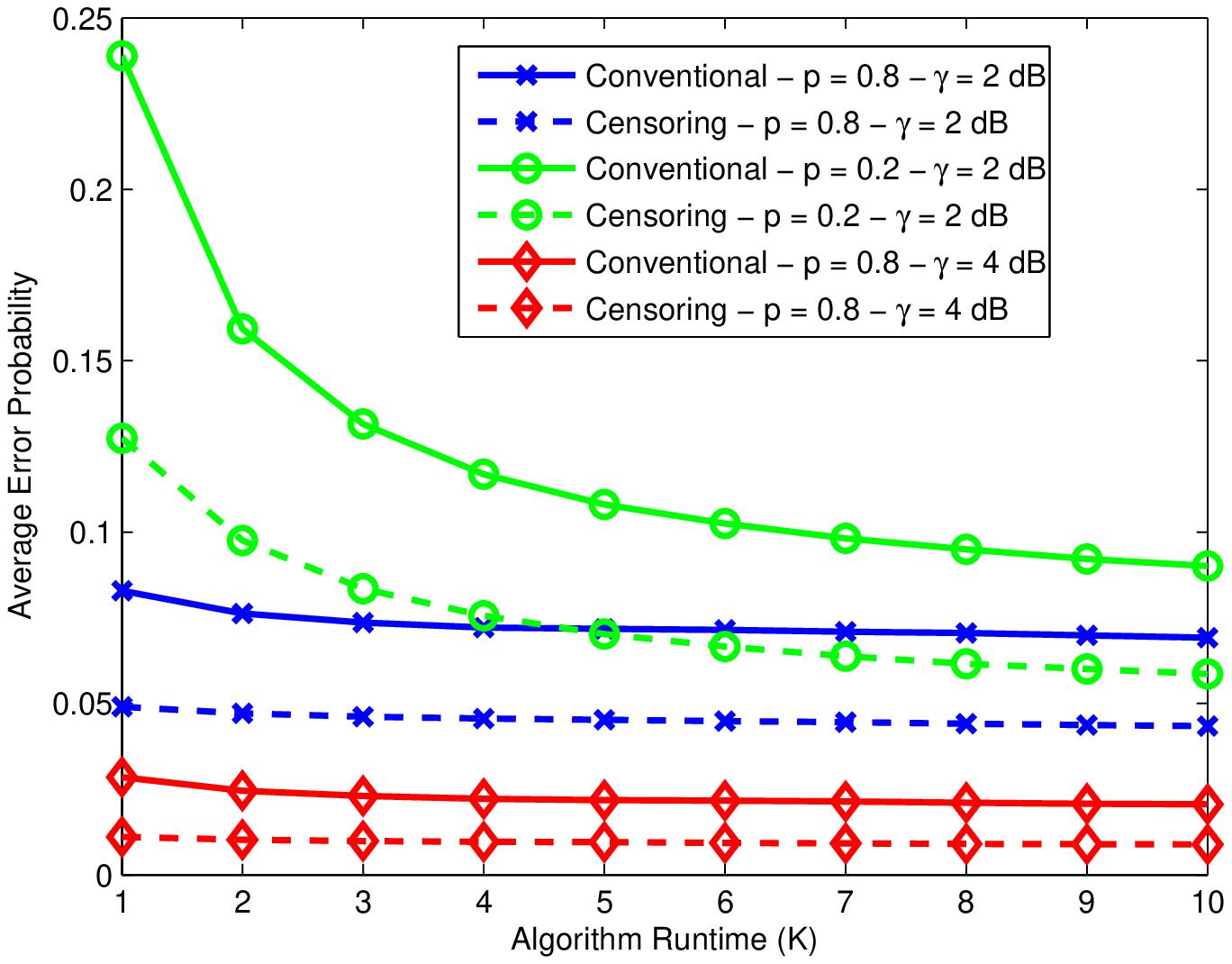}
\caption{Average probability of error (analytical) - $M=51$, $TB=5$.}
\label{fig::error_probability}
 \end{minipage}
  \begin{minipage}{0.3\textwidth}
  \begin{center}
\includegraphics[width=2.4in]{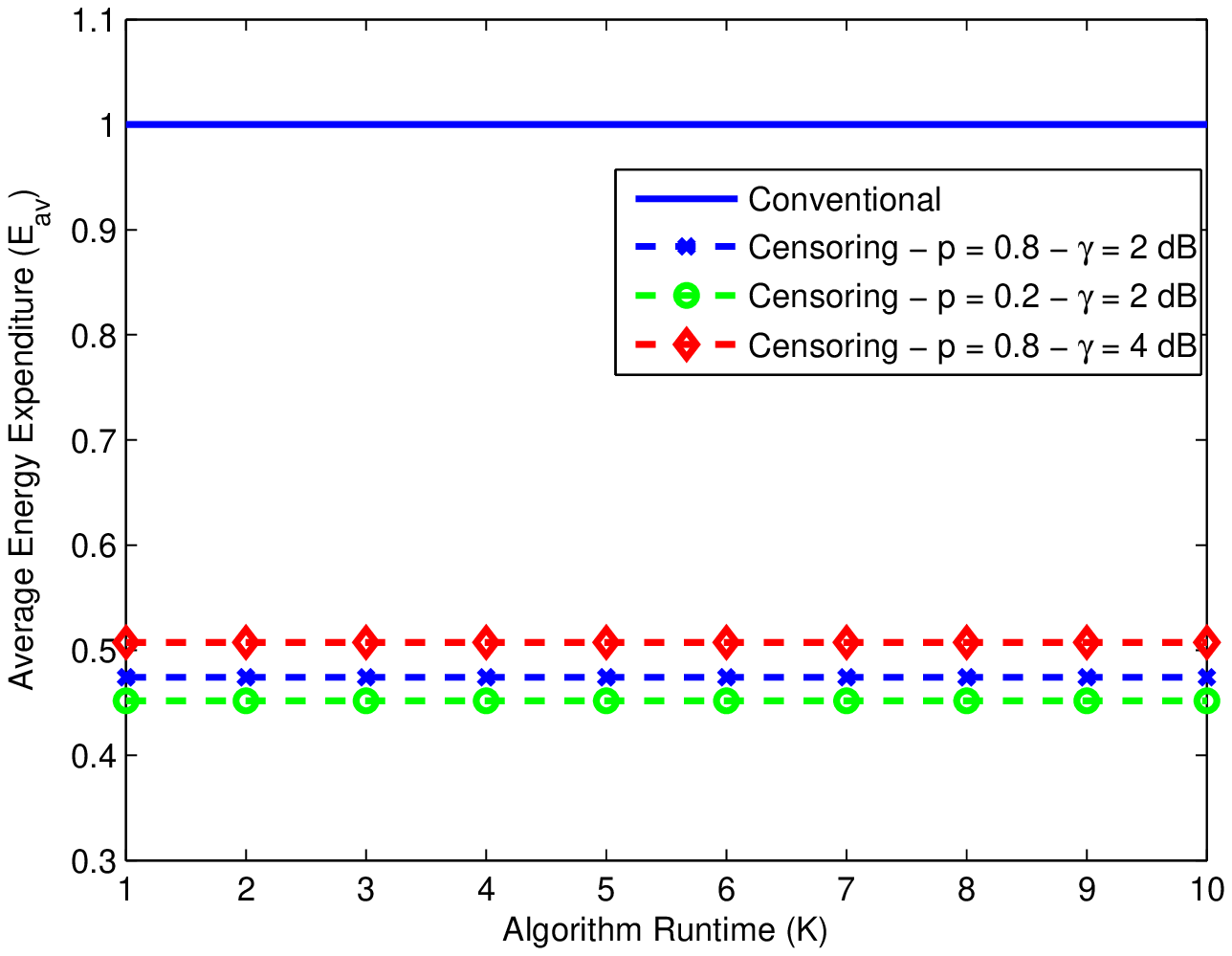}
\caption{ Energy expenditure (analytical) - $M=51$, $TB=5$.}
\label{fig::energy_expenditure}
\end{center}
 \end{minipage}
 \begin{minipage}{0.3\textwidth}
   \includegraphics[width=2.4in]{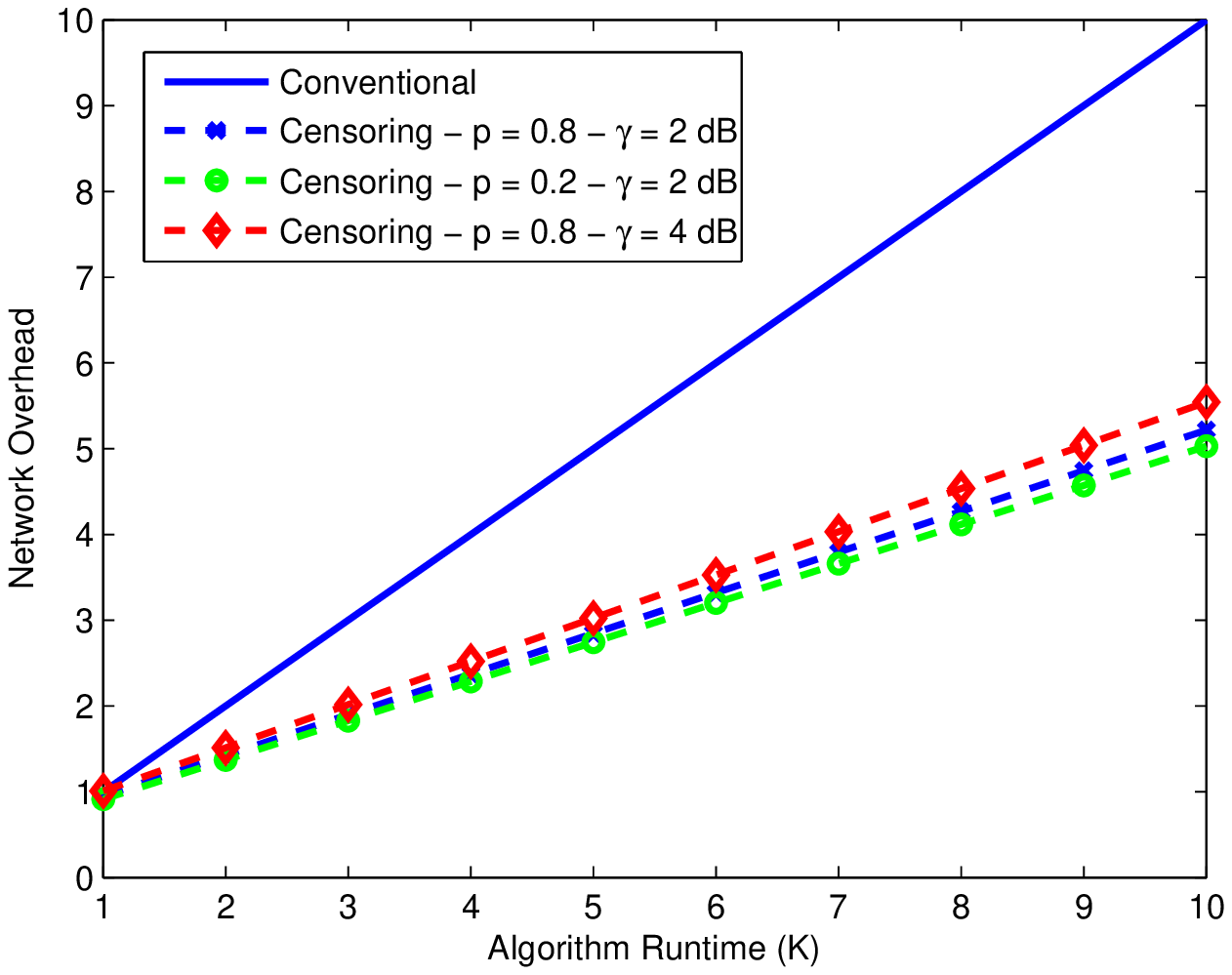}
\caption{Network overhead (analytical) - $M=51$, $TB=5$.}
\label{fig::overhead}
 \end{minipage}
\end{figure*}

\subsection{Energy Expenditure}
In a conventional system a node typically transmits an equal amount of energy for both decisions (assuming polar representation). Let $E$ be the energy consumed by a node to transmit a decision (either a 1 or a -1) to neighboring nodes; then the average energy consumed by a node is $E$. In a censoring-enabled system, however, a node saves the transmit energy during the censoring phase. If $E$ is the energy required to transmit a 1 or a -1 in the censoring case, then the average energy consumed by a node is equal to:
\begin{equation}
\label{energy}
E_{av} =\left( p_{H_0}(1 - \pi_{00}) + p_{H_1}(1 - \pi_{01})\right)E \leq E,
\end{equation}

\subsection{Overhead}
The binary consensus algorithm requires nodes to exchange data packets between neighbors during the running time of the algorithm. Specifically, in a conventional binary consensus setting, a node transmits $K$ data packets to direct neighbors if the algorithm takes $K$ time steps to run.

In a censoring-enabled setting, however, a node which decides to censor transmission will not congest the network and therefore will reduce the average number of packets transmitted per node. Given that $C$ is the number of nodes which made a decision to censor transmission in a network, then the average number of censoring nodes at any time can be expressed as:
\begin{equation}
\mathcal{E}(C) = p_{H_0}*\mathcal{E}_0(C) + p_{H_1}*\mathcal{E}_1(C)
\end{equation}

\begin{figure}
\centering
 \includegraphics[width=2.6 in]{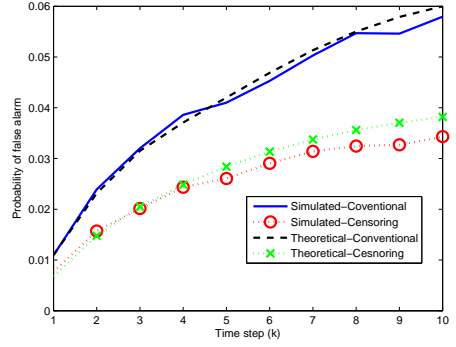}
 \caption{False Alarm Probability: Simulation versus analytical expression - $M=51$, $p=0.8$, $\bar{\gamma}=2 dB$, $\eta=10.3$, $\eta_1=14.6$, $\eta_0=7$, $TB=5$.}
  \label{fig::false_alarm_verify}
\end{figure}

\noindent where $\mathcal{E}(X)$ is the expectation of $X$ and $\mathcal{E}_j(X)$ is the conditional expectation of $X$ given $H_j$. The distribution of $C$ conditioned on $H_j$ is binomial with probability of success equal to $\pi_{0j}$ out of $M$ trials, therefore the expectation of which is given by $\mathcal{E}_j(C) = \pi_{0j}M$. Therefore the average number of censoring nodes is equal to:

\begin{equation}
\mathcal{E}(C) = (p_{H_0} \pi_{00} + p_{H_1} \pi_{01})M
\end{equation}

\noindent and the average number of transmitted messages per node in this case is equal to:
\begin{equation}
\label{overhead}
\begin{split}
\dfrac{(1 - (p_{H_0} \pi_{00} + p_{H_1} \pi_{01}))M \times K}{M} &= \\
(1 - (p_{H_0} \pi_{00} &+ p_{H_1} \pi_{01}))K \leq K
\end{split}
\end{equation}

\subsection*{Validation}
In this subsection, we validate the obtained expressions via numerical simulations. We assume a CRN that consists of $M$ randomly deployed SUs with a uniform distribution. We assume $p=0.8$ and $\bar{\gamma} = 2$ dB. Figures \ref{fig::detection_verify} and \ref{fig::false_alarm_verify} show a comparison between the simulated and analytically obtained values of the probability of detection and false alarms respectively, which clearly shows the soundness of the obtained expressions (\ref{pd_censoring}) and (\ref{pfa_censoring}). The values of the thresholds that were used to produce the two figures are obtained by optimizing over the average error probability, the procedure of which is explained in the following section.  Similar numerical validations were made for equations (\ref{energy}) and (\ref{overhead}) but were omitted due to space limitations..


\section{Numerical Evaluation}
In this section, we compare the performance of our proposed algorithm with its conventional counterpart. We provide numerical evaluation of the obtained expressions in (\ref{pd_conventional}), (\ref{pfa_conventional}), (\ref{pd_censoring}) and (\ref{pfa_censoring}). We numerically optimize the average probability of error over the decision thresholds, i.e., we search for the values of $\eta$, $\eta_1$ and $\eta_0$ which minimize the average probability of error in both the conventional and censoring cases. We compare three different scenarios to observe the performance of our proposed system in various network conditions: 1) Worst-case scenario: poor network connectivity ($p=0.2$) and bad communication channel ($\bar{\gamma} = 2$ dB), 2) Best-case scenario: high network connectivity ($p=0.8$) and good communication channel ($\bar{\gamma} = 4$ dB), and 3) Normal scenario: high network connectivity ($p=0.8$) and bad communication channel ($\bar{\gamma} = 2$ dB).

\subsection{Average Error Probability}
Figure~\ref{fig::error_probability} shows the average error probability of the conventional and censoring systems against $K$ for $M=51$. It is clear that censoring-enabled systems outperform conventional systems for all tested scenarios, which implies that it is always a sane option to employ our proposed algorithm for detection performance with less values of average error probability. In fact, highest performance gains are achieved in worst case scenario ($p=0.2$ and $\bar{\gamma}=2$ dB), where performance gain up to 46.6\% can be attained for $K=1$.

\subsection{Energy Expenditure}
Figure \ref{fig::energy_expenditure} shows the average energy expenditure for all three scenarios with censoring sensors for $E=1$. The average energy expenditure is not significantly affected by the change of the network conditions nor $K$. It is clear though that for all cases, censoring-enabled systems provide energy savings than their conventional counterpart, with an average gain of 48.7\% of energy savings.


\subsection{Network Overhead}
 Figure \ref{fig::overhead} shows the average incurred overhead for conventional and various censoring-enabled cases with a performance gain up to 50\%, which also shows that our proposed protocol proves superior on that front.

\section{Conclusions}
In this paper, we proposed a distributed detection framework for infrastructure-less CRNs which allows SUs to censor transmission. Our proposed system is based on using binary consensus algorithm for data exchange between SUs and therefore does not require the presence of a FC. We derived analytical expressions for performance metrics such as average error probability, energy expenditure and incurred overhead. We validated the obtained expressions via numerical simulations, and we proved that our proposed system significantly outperforms existing non-censoring counterparts in all the previously mentioned performance metrics. Performance gains were established up to 46.6\% in average error probability, energy savings of 48.7\% and up to 50\% savings in incurred transmission overhead.


\bibliographystyle{IEEEbib}
\bibliography{lib}

\end{document}